\newtheorem{thm}{Theorem}[section]
\newtheorem{lem}{Lemma}[section]
\newtheorem{cor}{Corollary}[section]
\newtheorem{rem}{Remark}[section]
\numberwithin{equation}{section}
\begin{document}

\title{Scattering solutions and Born approximation for the magnetic Schr\"odinger operator} 

\author{Valery Serov$^{1}$ and Jan Sandhu$^{2}$ \\ Department of Mathematical Sciences \\ University of Oulu, Finland}

\date{}

\footnotetext[1]{E-mail address: vserov@cc.oulu.fi}
\footnotetext[2]{E-mail address: jan.sandhu@oulu.fi}

\maketitle

\begin{abstract}
We prove the existence of scattering solutions for multidimensional magnetic Schr\"odinger equation which belong to the weighted Sobolev space $H^1_{-\delta}(R^n)(n=2,3)$ with some $\delta>\frac{1}{2}$. 
As a consequence of this we formulate the direct Born approximation for the magnetic Schr\"odinger operator. Possible connections with inverse problems (inverse scattering Born approximation) are discussed.
\end{abstract}

\section{Introduction}

The main goal of present article is to justify the application of the classical direct scattering Born approximation for the magnetic Schr\"odinger operator. The direct Born approximation is known as the most applicable approximate method in the numerous practical problems. It is also known that the inverse scattering Born approximation is well-defined and perfectly works (as the mathematical tool) in the case of linear and nonlinear Schr\"odinger operators and for all types of scattering data: full scattering, backscattering, fixed angle scattering and fixed energy scattering. For some scattering data it is possible to get the uniqueness and reconstruction procedure while for some data we are able to reconstruct singularities and jumps of unknowns even when there is no uniqueness. We mention here the results of P\"aiv\"arinta and Somersalo \cite{PSo}, Nachman \cite{N1}, \cite{N2}, Sun and Uhlmann \cite{SuU}, Isakov and Sylvester \cite{ISy}, P\"aiv\"arinta, Serov and Somersalo \cite{PSeSo}, P\"aiv\"arinta and Serov \cite{PSe1}, \cite{PSe3}, \cite{PSe4}, Ola, P\"aiv\"arinta and Serov \cite{OPSe}, Ruiz \cite{Ru}, Ruiz and Vargas \cite{RuV}, P\"aiv\"arinta and Serov \cite{PSe2}, Reyes \cite{Re}, Serov \cite{Se}, Serov and Harju \cite{SeH1}, \cite{SeH2}, Serov and Sandhu \cite{SeSa}, Lechleiter \cite{Le}, Reyes and Ruiz \cite{ReRu} and some others. The main point of all these results is the precise calculation of the first (quadratic) nonlinear term in the Born series. For the magnetic Schr\"odinger operator the direct scattering problem (i.e., existence of the scattering solutions) as well as the inverse scattering Born approximation are not familiar at all.
The big interest to this problem is connected to the fact that the knowledge of the scattering amplitude with backscattering data allows us to obtain essential information about the unknowns.

We consider the magnetic Schr\"odinger operator
\begin{equation}
H=-(\nabla+i \vec{W}(x))^2+V(x)\cdot,\quad x\in R^n,
\end{equation}   
in dimensions $n=2,3,$ where the coefficients $\vec{W}(x)$ and $V(x)$ are assumed to be real-valued. We assume generally that $\vec{W}(x)\in L^{\infty}_{\delta}(R^n)$ and
\begin{equation}
\nabla\vec{W}(x)\in L^p_{\delta}(R^n),\quad n=3,\quad 3\le p\le \infty;\quad n=2,\quad 2<p\le \infty
\end{equation}
and
\begin{equation}
V(x) \in L^p_{\delta}(R^n),\quad n=3,\quad 3\le p\le \infty;\quad n=2,\quad 2<p\le \infty,
\end{equation}
where $\delta>\frac{n+1}{2}-\frac{n}{p}$. Here $L^p_{\sigma}$ denotes usual weighted Lebesgue space and Sobolev space $W^1_{p,\sigma}$ is understood so that $f$ belongs to $W^1_{p,\sigma}(R^n)$ if and only if $f$ and $\nabla f$ belong to $L^p_{\sigma}(R^n)$. For the case $p=2$ instead of the symbol $W^1_{2,\sigma}$ we use the symbol $H^1_{\sigma}$.

It is well-known that under these conditions for the coefficients of the magnetic Schr\"odinger operator the following G\aa rding's inequality holds:
$$
(H u,u)_{L^2(R^n)}\ge \nu\|\nabla u\|^2_{L^2(R^n)}-C\|u\|^2_{L^2(R^n)},
$$
where $0<\nu<1, C>0$. This inequality allows us to define symmetric operator $H$ by the method of quadratic forms. $H$ has a self-adjoint Friedrichs extension with the domain (in general)
$$
D(H)=\{f(x)\in W^1_2(R^n): H f(x)\in L^2(R^n)\}.
$$
In our particular case it is possible to prove that actually
$$
D(H)=W^2_2(R^n).
$$

In the scattering theory the main role are played by the special solutions of the equation
$$
H u(x)=k^2 u(x)
$$
which are of the form
$$
u(x)=u_0(x)+u_{sc}(x),
$$
where $u_0(x)=e^{ik(x,\theta)}$ is incident wave with direction $\theta \in S^{n-1}$ and the scattered wave $u_{sc}(x)$
satisfies the Sommerfeld radiation condition at the infinity, i.e.
\begin{equation}
\lim\limits_{r\to +\infty}r^{\frac{n-1}{2}}\left(\frac{\partial u_{sc}(x)}{\partial r}-iku_{sc}(x)\right)=0,\quad r=|x|.
\end{equation}
In this case the total field $u$ satisfies the so-called Lippmann-Schwinger equation
\begin{equation}
u=u_0+\int\limits_{R^n}G_k^{+}(|x-y|)\left(i\nabla (\vec{W}(y)u)+i\vec{W}(y)\nabla u-\tilde{q}(y)u\right)\,dy,
\end{equation}
where $\tilde{q}=|\vec{W}|^2+V$ and $G_k^+$ is the kernel of integral operator $(-\Delta-k^2-i0)^{-1}$. Using the representation $u=u_0+u_{sc}$ we rewrite this integral equation (1.5) only for scattered field $u_{sc}$ as
\begin{equation}
u_{sc}=\tilde{u}_0+\int\limits_{R^n}G_k^{+}(|x-y|)\left(i\nabla (\vec{W}(y)u_{sc})+i\vec{W}(y)\nabla u_{sc}-\tilde{q}(y)u_{sc}\right)\,dy,
\end{equation} 
where $\tilde{u}_0$ is equal to 
\begin{equation}
\tilde{u}_0(x)=\int\limits_{R^n}G_k^{+}(|x-y|)\left(i\nabla (\vec{W}(y)u_0)+i\vec{W}(y)\nabla u_0-\tilde{q}(y)u_0\right)\,dy.
\end{equation}
We use the following results of Agmon \cite{Ag} (see Remark 2, Appendix A):
$$
\frac{1}{|k|}\|f\|_{H^2_{-\delta}(R^n)}+\|f\|_{H^1_{-\delta}(R^n)}+|k|\|f\|_{L^2_{-\delta}(R^n)}\le c\|(\Delta+k^2)f\|_{L^2_{\delta}(R^n)},\quad |k|\ge 1,
$$
where $\delta >\frac{1}{2}$ and $H^2_{-\delta}(R^n)$ denotes the weighted Sobolev space. As a consequence we have (for fixed $k$) that 
$$
\|(-\Delta-k^2-i0)^{-1}f\|_{H^2_{-\delta}(R^n)}\le c(k)\|f\|_{L^2_{\delta}(R^n)},
$$
and uniformly in $|k|\ge 1$ we have that
\begin{equation}
\|(-\Delta-k^2-i0)^{-1}f\|_{L^2_{-\delta}(R^n)}\le \frac{c}{|k|}\|f\|_{L^2_{\delta}(R^n)}.
\end{equation}
But since $(-\Delta-k^2-i0)^{-1}$ is the integral operator of convolution type we can conclude that for fixed $k$ it maps continuously $H^{-1}_{\delta}(R^n)$ to $H^1_{-\delta}(R^n)$,
where $H^{-1}_{\delta}(R^n)$ denotes the dual of the Sobolev space $H^1_{-\delta}(R^n)$. 

We rewrite (1.6) as the integral equation 
$$
u_{sc}=\tilde{u}_0+L_k(u_{sc}), \quad \tilde{u}_0=L_k(u_0),
$$
where the integral operator $L_k$ is defined as
\begin{equation}
L_kf(x):=\int\limits_{R^n}G_k^{+}(|x-y|)\left(i\nabla (\vec{W}(y)f)+i\vec{W}(y)\nabla f-\tilde{q}(y)f\right)\,dy.
\end{equation}
The main result of present article is Theorem 2.1 which provides the existence of the scattering solutions for the magnetic Schr\"odinger operator. This theorem is proved in Section 2. Based on the main result we justify in Section 3 the direct Born approximation for such operators.  

\section{Existence of the scattering solutions}
We are preceding a proof of the main result by the following lemmas.
\begin{lem}
Suppose that conditions (1.2) and (1.3) are fulfilled. Then there is $\delta_0>\frac{1}{2}$ such that $\tilde{u}_0\in H^1_{-\delta_0}(R^n)$ and the integral operator $L_k$ maps $H^1_{-\delta_0}(R^n)$ into itself. 
\end{lem}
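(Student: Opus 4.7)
The plan is to write $L_k f = K A_k f$ with $K := (-\Delta - k^2 - i0)^{-1}$ and
$$
A_k f := i\nabla(\vec{W}f) + i\vec{W}\cdot\nabla f - \tilde{q}f,
$$
so that the lemma reduces to producing some $\delta_0 > 1/2$ with the property that $A_k$ maps $H^1_{-\delta_0}(R^n)$ into $H^{-1}_{\delta_0}(R^n)$ and $A_k u_0 \in H^{-1}_{\delta_0}(R^n)$; the introduction already records, via Agmon's estimate combined with the convolution structure of $K$, that $K:H^{-1}_{\delta_0}(R^n) \to H^1_{-\delta_0}(R^n)$ continuously for every $\delta_0 > 1/2$.

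For the mapping on $H^1_{-\delta_0}$, I would bound each of the three pieces of $A_k f$ separately. The gradient term $\nabla(\vec{W}f)$ is treated by duality: pairing with $\phi \in H^1_{-\delta_0}$ gives $|\langle \nabla(\vec{W}f),\phi\rangle| \le \|\vec{W}f\|_{L^2_{\delta_0}}\|\nabla\phi\|_{L^2_{-\delta_0}}$, so one only needs $\vec{W}f \in L^2_{\delta_0}$, which follows from $|\vec{W}(x)| \le C(1+|x|)^{-\delta}$ and $f \in L^2_{-\delta_0}$ as soon as $\delta \ge 2\delta_0$. The same pointwise bound on $\vec{W}$ handles $\vec{W}\cdot\nabla f$ (using $\nabla f \in L^2_{-\delta_0}$) and the $|\vec{W}|^2 f$ part of $\tilde{q}f$. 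For the $Vf$ part I would use H\"older with $1/p + 1/q = 1/2$ together with the weighted Sobolev embedding $H^1_{-\delta_0} \hookrightarrow L^q_{-\delta_0}$; the admissible range $q = 2p/(p-2) \in [2,6]$ for $n=3$, $p \ge 3$, and $q \in (2,\infty)$ for $n=2$, $p>2$, is precisely why the hypothesis (1.3) is stated in this form.

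For $\tilde{u}_0 = L_k u_0$ the same three-term decomposition is used, but now $|u_0| \equiv 1$ and $|\nabla u_0| \equiv |k|$ are bounded, so the estimates reduce to weighted inclusions of the form $L^\infty_\delta \hookrightarrow L^2_{\delta_0}$ (for the factors carrying $\vec{W}$ or $|\vec{W}|^2$) and $L^p_\delta \hookrightarrow L^2_{\delta_0}$ (for the factors carrying $\nabla \vec{W}$ or $V$). The latter is the decisive embedding: a short H\"older computation against the weight $(1+|x|)^{2(\delta_0-\delta)p/(p-2)}$ shows that $\|g\|_{L^2_{\delta_0}} \le C\|g\|_{L^p_\delta}$ holds exactly when $\delta_0 < \delta - n/2 + n/p$, and this upper bound is strictly larger than $1/2$ precisely because we have assumed $\delta > (n+1)/2 - n/p$.

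The main technical obstacle will be the simultaneous bookkeeping of weights: every sub-term produces its own upper bound on $\delta_0$, and one has to verify that the intersection of all of them with $(1/2,\infty)$ is non-empty under hypotheses (1.2)--(1.3). The argument is tightest in the borderline regime $n=2$ with $p$ near $2$, where $\delta$ is only marginally above $1/2$ and $\delta_0$ must therefore be chosen just above $1/2$ in order for all of the weight inequalities to close simultaneously.
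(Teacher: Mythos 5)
Your proposal follows essentially the same route as the paper: factor $L_k$ through Agmon's resolvent estimates, reduce the mapping property to weighted inclusions of the coefficients into $L^2_{\delta_0}$ (your computation $\delta_0<\delta-\frac{n}{2}+\frac{n}{p}$ is exactly the embedding $L^p_{\delta}\subset L^2_{\delta_0}$ the paper invokes), and handle the $Vf$ term by H\"older together with the weighted Sobolev embedding of $H^1_{-\delta_0}$. The only cosmetic difference is that you treat $\nabla(\vec{W}f)$ by duality in $H^{-1}_{\delta_0}$ while the paper expands the derivative and places every term in $L^2_{\delta_0}$ before applying the resolvent; the simultaneous weight bookkeeping you defer at the end is asserted with the same brevity in the paper's own proof.
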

\begin{proof}
Conditions for $p$ and $\delta$ from (1.2) and (1.3) imply that there is $\delta_0>\frac{1}{2}$ such that
$$
L^p_{\delta}(R^n)\subset L^2_{\delta_0}(R^n).
$$ 
It is therefore true that under the conditions (1.2) and (1.3) functions $V$, $\vec{W}$, $\nabla \vec{W}$ and $|\vec{W}|^2$ belong to $L^2_{\delta_0}(R^n)$ with the same $\delta_0$. Since $u_0$ is a bounded and smooth function we may conclude (using Agmon's result (1.8)) that $\tilde{u}_0$ belongs to $H^1_{-\delta_0}(R^n)$. It can be mentioned here that we have no longer uniform estimates in $k$ as in (1.8). In order to prove that $L_k$ maps $H^1_{-\delta_0}(R^n)$ into itself we note that if $f$ belongs to $H^1_{-\delta_0}(R^n)$ then $(1+|x|^2)^{-\frac{\delta_0}{2}}f$ belongs to usual Sobolev space $H^1(R^n)$. Using now Sobolev imbedding theorem we conclude that
$$
f\in L^{\frac{2n}{n-2}}_{-\delta_0}(R^n), \quad n=3,\quad f\in L^s_{-\delta_0}(R^n), \quad s<\infty,\quad n=2. 
$$
Then the conditions (1.2) and (1.3) and H\"older inequality allow us easily conclude that $\tilde{q}f$ and $(\nabla\vec{W})f$ belong $L^2_{\delta_0}(R^n)$. Since we have $\vec{W}(x)\in L^{\infty}_{\delta}(R^n)$ the function $\vec{W}\nabla f$ will belong to $L^2_{\delta_0}(R^n)$ too. The final step is the application of Agmon's result (1.8). 
\end{proof}
We may prove a little bit more about this operator $L_k$.
\begin{lem} Let us assume that $\vec{W}\in L^{\infty}_{\delta}(R^n)$, 
\begin{equation}
\nabla \vec{W}(x)\in L^p_{\delta}(R^n),\quad n=3,\quad 3\le p\le \infty;\quad n=2,\quad 2<p\le \infty,
\end{equation}
where $\delta>\frac{n+1}{2}-\frac{n}{p}$, and
\begin{equation}
V(x) \in L^p_{loc}(R^n),\quad n=3,\quad 3\le p\le \infty;\quad n=2,\quad 2<p\le \infty,
\end{equation}
and that $\vec{W}$ and $V$ have special behavior at the infinity such that
\begin{equation}
|V(x)|,\quad |\vec{W}(x)|,\quad |\nabla\vec{W}(x)|\le \frac{c}{|x|^{\mu}},\quad |x|\to \infty,
\end{equation} 
where $\mu>2$ for $n=2,3$. Then the operator $L_k$ is compact in $H^1_{-\delta_0}(R^n)$ for some $\delta_0>\frac{1}{2}$.
\end{lem}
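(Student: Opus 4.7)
The plan is to factor $L_k = R_k\circ M$, where $R_k := (-\Delta-k^2-i0)^{-1}$ is continuous from $H^{-1}_{\delta_0}(R^n)$ to $H^1_{-\delta_0}(R^n)$ by the remark following (1.8), and
\[
Mf := i\nabla(\vec{W}\,f) + i\vec{W}\cdot\nabla f - \tilde{q}\,f,
\]
and then to realize $L_k$ as the operator-norm limit of compact operators obtained by spatially truncating the coefficients. Concretely, let $\chi_R$ be a smooth cutoff equal to $1$ on $|x|\le R$ and $0$ on $|x|\ge R+1$, set $\vec{W}_R=\chi_R\vec{W}$, $V_R=\chi_R V$, $\tilde{q}_R=V_R+|\vec{W}_R|^2$, and denote the truncated operators by $M_R$ and $L_k^R = R_k\circ M_R$.

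For the norm convergence $\|L_k-L_k^R\|_{H^1_{-\delta_0}\to H^1_{-\delta_0}}\to 0$, I choose $\delta_0>\frac{1}{2}$ small enough that $2\delta_0<\mu$, which is possible because $\mu>2$. Using (2.3), the differences $\vec{W}-\vec{W}_R$, $\nabla\vec{W}-\nabla\vec{W}_R$, $V-V_R$ are supported in $|x|>R$ and majorized pointwise by $c|x|^{-\mu}$, hence
\[
\|\vec{W}-\vec{W}_R\|_{L^\infty_{2\delta_0}}+\|\nabla\vec{W}-\nabla\vec{W}_R\|_{L^\infty_{2\delta_0}}+\|V-V_R\|_{L^\infty_{2\delta_0}} \le CR^{2\delta_0-\mu}\to 0.
\]
The weighted H\"older bound $\|a\,g\|_{L^2_{\delta_0}}\le \|a\|_{L^\infty_{2\delta_0}}\|g\|_{L^2_{-\delta_0}}$, applied with $g=f$ and with $g=\nabla f$, together with the estimate $\|\nabla h\|_{H^{-1}_{\delta_0}}\le \|h\|_{L^2_{\delta_0}}$ (valid since $H^{-1}_{\delta_0}$ is the dual of $H^1_{-\delta_0}$), then gives $\|(M-M_R)f\|_{H^{-1}_{\delta_0}}\le CR^{2\delta_0-\mu}\|f\|_{H^1_{-\delta_0}}$; composing with $R_k$ yields the claimed norm convergence.

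Compactness of $L_k^R$ for each fixed $R$ is obtained via Rellich--Kondrachov. Given a bounded sequence $\{f_n\}\subset H^1_{-\delta_0}$, the restrictions $\{f_n|_{B_{R+1}}\}$ are bounded in $H^1(B_{R+1})$, so a subsequence converges strongly in $L^2(B_{R+1})$ and in $L^q(B_{R+1})$ for every $q$ below the Sobolev exponent. The terms $\tilde{q}_R f_n$ and $(\nabla\vec{W}_R)f_n$ then converge in $L^2_{\delta_0}$ by the H\"older argument already used in Lemma~2.1. For $\vec{W}_R\cdot\nabla f_n$, on which only weak $L^2$-convergence of $\nabla f_n$ is directly available, I invoke the distributional identity
\[
\vec{W}_R\cdot\nabla f_n = \nabla\cdot(\vec{W}_R f_n) - (\nabla\cdot\vec{W}_R) f_n;
\]
since $\vec{W}_R\in L^\infty$ is compactly supported, $\vec{W}_R f_n$ converges strongly in $L^2$, whence $\nabla\cdot(\vec{W}_R f_n)$ converges in $H^{-1}$, while the corrector converges in $L^2_{\delta_0}$. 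Thus $M_R f_n$ is Cauchy in $H^{-1}_{\delta_0}$, and applying $R_k$ produces convergence in $H^1_{-\delta_0}$. Combined with the norm convergence, $L_k$ is compact as a norm limit of compact operators. The main obstacle is precisely the $\vec{W}\cdot\nabla f$ term: direct multiplication fails because $\nabla f_n$ converges only weakly, and the derivative must be transferred to the strongly convergent product $\vec{W}_R f_n$ through the identity above, the resulting corrector being controlled thanks to the regularity assumption (2.1) on $\nabla\vec{W}$; everything else reduces to Agmon's resolvent estimate, H\"older in weighted spaces, and the pointwise decay $\mu>2$.
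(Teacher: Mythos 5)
Your proof is essentially correct and follows the same global strategy as the paper: split the coefficients into a compactly supported part and a decaying tail, use Agmon's estimate to send $H^{-1}_{\delta_0}(R^n)$ into $H^1_{-\delta_0}(R^n)$ via the resolvent, and exhibit $L_k$ as a norm limit of compact operators using $2\delta_0<\mu$ for the tail. The one genuinely different ingredient is how you obtain compactness of the localized piece, and here the paper's route is worth noting because it is both shorter and avoids the difficulty you flag as ``the main obstacle.'' The paper observes that for $f\in H^1_{-\delta_0}(R^n)$ the whole expression $i\nabla(\vec{W}_1 f)+i\vec{W}_1\nabla f-\tilde{q}_1 f$ lands \emph{boundedly} in $L^2(B_R)$ (the term $\vec{W}_1\nabla f$ is harmless there since $\vec{W}_1\in L^\infty$ and $\nabla f\in L^2_{loc}$), and then extracts compactness from the compact embedding $L^2(B_R)\hookrightarrow H^{-1}(B_R)\hookrightarrow H^{-1}_{\delta_0}(R^n)$ --- i.e.\ the compactness lives in the target space, not in the input sequence. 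You instead apply Rellich--Kondrachov to the input sequence $f_n$ and must therefore transfer the derivative off $f_n$ via $\vec{W}_R\cdot\nabla f_n=\nabla\cdot(\vec{W}_R f_n)-(\nabla\cdot\vec{W}_R)f_n$; this works, but it introduces the corrector term and, more importantly, your H\"older step for $\tilde{q}_R f_n$ and $(\nabla\cdot\vec{W}_R)f_n$ at the endpoint $n=3$, $p=3$ requires strong convergence of $f_n$ in $L^6(B_{R+1})$, which Rellich does not give (only $q<6$); you would need an additional truncation of $V_R$ and $\nabla\vec{W}_R$ in $L^3$ (splitting off the part where they are large, which has small $L^3$ norm) to close that case. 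The paper's argument is immune to this because it only needs boundedness, not convergence, of the multiplication map into $L^2(B_R)$. For the tails, your direct truncation with the explicit rate $R^{2\delta_0-\mu}$ is equivalent to, and slightly cleaner than, the paper's approximation of $V_2,\vec{W}_2,\nabla\vec{W}_2$ by $C^\infty_0$ functions in weighted $L^\infty$ norms.
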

\begin{proof} Let us choose $R>0$ large enough and represent $V, \vec{W}$ and $\nabla\vec{W}$ as 
$$
V=V_1+V_2,\quad \vec{W}=\vec{W}_1+\vec{W}_2,\quad \nabla\vec{W}=\nabla\vec{W}_1+\nabla\vec{W}_2,
$$  
where the supports of the functions $V_1$, $\vec{W}_1$ and $\nabla\vec{W}_1$ are included in the ball $B_R=\{x\in R^n: |x|\le R\}$, but supports of the functions $V_2$, $\vec{W}_2$ and $\nabla\vec{W}_2$ are included in the set $\{x\in R^n: |x|\ge R\}$. Without loss of generality we assume in addition that the functions $V_2$, $\vec{W}_2$ and $\nabla\vec{W}_2$ are continuous and satisfy the conditions (2.3) for all $|x|\ge R$.

Conditions (2.1)-(2.2) and the previous considerations imply that
$$
i\nabla (\vec{W}_1(x)f(x))+i\vec{W}_1(x)\nabla f(x)-\tilde{q}_1(x)f(x) \in L^2(B_R)
$$
for any function $f\in H^1_{-\delta_0}(R^n)$. But $L^2(B_R)$ is compactly imbedded in $H^{-1}(B_R)$ and therefore in $H^{-1}_{\delta_0}(R^n)$ for $\delta_0>\frac{1}{2}$. It remains to mention now that due to Agmon's result (1.8) operator $(-\Delta-k^2-i0)^{-1}$ maps continuously the space $H^{-1}_{\delta_0}(R^n)$ to the space $H^{1}_{-\delta_0}(R^n)$. Thus, part of the operator $L_k$ that corresponds to $V_1,\vec{W}_1$ and $\nabla\vec{W}_1$ is compact in $H^{1}_{-\delta_0}(R^n)$ for $\delta_0>\frac{1}{2}$. Since outside the ball $B_R$ these functions $V_2$, $\vec{W}_2$ and $\nabla\vec{W}_2$ satisfy the conditions (2.3) we may firstly conclude using Sobolev imbedding theorem that for $f\in H^1_{-\delta_0}(R^n)$ we have
$$
i\nabla (\vec{W}_2(x)f(x))+i\vec{W}_2(x)\nabla f(x)-\tilde{q}_2(x)f(x) \in L^2_{\delta_0}(R^n)
$$
for some $\delta_0$ if and only if $\mu>2\delta_0+1$. But under the conditions of Lemma 2.2 this criterion is satisfied if $\delta_0>\frac{1}{2}$ is chosen appropriately.

Since the conditions (2.3) are satisfied we can find two sequences $\phi_j(x)\in C^{\infty}_0(R^n\setminus B_R)$ and $\psi_j(x)$ (vector-valued) $\in C^{\infty}_0(R^n\setminus B_R)$ such that
$$
\|\phi_j-V_2\|_{L_{\delta'}^{\infty}(R^n\setminus B_R)}\to 0,\quad \|\psi_j-\vec{W}_2\|_{L_{\delta'}^{\infty}(R^n\setminus B_R)}\to 0,
$$ 
$$
\|\nabla\psi_j-\nabla\vec{W}_2\|_{L_{\delta'}^{\infty}(R^n\setminus B_R)}\to 0
$$
as $j\to \infty$ for any $\delta'<\mu$.
These approximation properties imply that
$$
\|i\nabla ((\vec{W}_2-\psi_j)f)+i(\vec{W}_2-\psi_j)\nabla f-(\tilde{q}_2-Q_j)f\|_{L^2_{\delta_0}(R^n)}\le 
$$
\begin{equation}
\le c\||\nabla (\vec{W}_2-\psi_j)|+|\vec{W}_2-\psi_j|+|\tilde{q}_2-Q_j|\|_{L^{\infty}_{2\delta_0}(R^n)}\|f\|_{H^1_{-\delta_0}(R^n)}\to 0
\end{equation}
as $j\to \infty$, where $Q_j=|\psi_j|^2+\phi_j$. Since we can choose $\delta_0>\frac{1}{2}$ and $\mu$ from condition (2.3) such that $2\delta_0<\mu$ then (2.4) means that part of the operator $L_k$ which corresponds to $V_2,\vec{W}_2$ and $\nabla\vec{W}_2$ is compact in $H^{1}_{-\delta_0}(R^n)$ too. 
\end{proof}
\begin{lem}
Under the same assumptions as in Lemma 2.2 for any fixed $k>0$ and for any $f\in H^1_{-\delta_0}(R^n)$ with some $\delta_0>\frac{1}{2}$ the following asymptotical representation holds:
$$
L_kf(x)=c_n\frac{e^{ik|x|}k^{\frac{n-3}{2}}}{|x|^{\frac{n-1}{2}}}\int\limits_{R^n}e^{-ik(\theta',y)}\left(i\nabla (\vec{W}(y)f)+i\vec{W}(y)\nabla f-\tilde{q}(y)f\right)\,dy+
$$
\begin{equation}
+o\left(\frac{1}{|x|^{\frac{n-1}{2}}}\right),\quad |x|\to \infty,
\end{equation}
where $\theta'=\frac{x}{|x|}$.
\end{lem}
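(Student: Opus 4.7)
The plan is to insert into the definition (1.9) of $L_k$ the classical far-field expansion of the outgoing Helmholtz Green's function
\begin{equation*}
G_k^+(r)=c_n\,k^{\frac{n-3}{2}}\,\frac{e^{ikr}}{r^{\frac{n-1}{2}}}\bigl(1+O(r^{-1})\bigr),\qquad r\to\infty,
\end{equation*}
(explicit through $H_0^{(1)}(kr)$ for $n=2$ and as $e^{ikr}/(4\pi r)$ for $n=3$), together with the Taylor expansion $|x-y|=|x|-(\theta',y)+O(|y|^2/|x|)$, uniform for $y$ in any bounded set. A preparatory step is to rewrite the density via $\nabla(\vec W f)=(\nabla\vec W)f+\vec W\nabla f$ as
\begin{equation*}
\mathcal F(y):=i(\nabla\vec W(y))f(y)+2i\vec W(y)\nabla f(y)-\tilde q(y)f(y),
\end{equation*}
which belongs to $L^2_{\delta_0}(R^n)$ by Lemma 2.1. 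The pointwise decay (2.3) with $\mu>2$, combined with the Sobolev embeddings for $f\in H^1_{-\delta_0}$ used in the proof of Lemma 2.1, upgrades this to $\mathcal F\in L^1(R^n)$, so that the integral on the right-hand side of (2.5) converges absolutely.

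The next step is to split the integration into $B_R=\{|y|\le R\}$ and its complement. On $B_R$ the two expansions above combine into the pointwise approximation
\begin{equation*}
G_k^+(|x-y|)=c_n\,k^{\frac{n-3}{2}}\,\frac{e^{ik|x|}}{|x|^{\frac{n-1}{2}}}\,e^{-ik(\theta',y)}+o\!\left(\frac{1}{|x|^{\frac{n-1}{2}}}\right),\qquad |x|\to\infty,
\end{equation*}
uniform for $y\in B_R$; multiplication by $\mathcal F$ and integration reproduces the main term of (2.5) with the outer integral restricted to $B_R$, plus an error $o_R(|x|^{-(n-1)/2})$. The discrepancy between the $B_R$- and $R^n$-integrals in the main term is bounded by $|x|^{-(n-1)/2}\|\mathcal F\|_{L^1(\{|y|>R\})}$, which is uniformly of the desired order and tends to $0$ as $R\to\infty$.

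For the tail $|y|>R$ I decompose further. On $\{|x-y|\ge|x|/2\}$ the kernel bound $|G_k^+(|x-y|)|\le c\,|x-y|^{-(n-1)/2}\le c\,|x|^{-(n-1)/2}$ paired with $\mathcal F\in L^1$ yields a contribution of at most $c\,|x|^{-(n-1)/2}\|\mathcal F\|_{L^1(\{|y|>R\})}$, again uniformly in $|x|$ and arbitrarily small as $R\to\infty$. On the remaining annulus $\{|x-y|<|x|/2,\,|y|>R\}$ the inclusion $|y|\ge|x|/2$ allows (2.3) to give $|\mathcal F(y)|\le C\,|x|^{-\mu}(|f(y)|+|\nabla f(y)|)$; a Cauchy--Schwarz estimate using $\|G_k^+\|_{L^2(B(0,|x|/2))}=O(|x|^{1/2})$ and $\|(|f|+|\nabla f|)\|_{L^2(B(x,|x|/2))}=O(|x|^{\delta_0})$ then produces a contribution of order $|x|^{-\mu+\delta_0+1/2}$, which is $o(|x|^{-(n-1)/2})$ for $\mu>2$ and $\delta_0$ chosen sufficiently close to $\tfrac12$. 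Sending first $|x|\to\infty$ with $R$ fixed and then $R\to\infty$ assembles the pieces into (2.5).

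The main obstacle is precisely the annulus $\{|x-y|<|x|/2,\,|y|>R\}$, where the observation point and the source point recede to infinity together and the pointwise far-field asymptotic of $G_k^+$ is no longer applicable. It is here that the decay condition (2.3), stronger than what was needed in Lemma 2.1, becomes essential: it supplies enough pointwise smallness of $\mathcal F$ to absorb the $L^2$-growth of $G_k^+$ on a ball of radius $|x|/2$ and still beat the benchmark rate $|x|^{-(n-1)/2}$.
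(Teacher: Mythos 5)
Your argument is correct, and it reaches (2.5) by a decomposition that differs from the paper's in two genuine respects. The paper first splits according to the size of the argument of the Hankel function ($k|x-y|>1$ versus $k|x-y|<1$), treating the near\--field singularity of $G_k^+$ as a separate case via the small\--argument asymptotics, and then extracts the main term over a \emph{growing} ball $|y|\le|x|^a$ with $0<a<\tfrac12$, so that the restricted integral converges to the full one as $|x|\to\infty$ in a single limit (using $\mathcal F\in L^1$); the remaining regions $|x|^a\le|y|\le|x|/2$ and $|y|\ge|x|/2$ are then estimated much as you do, the last one via Cauchy--Schwarz and a ``convolution of weak singularities'' integral in place of your explicit bound $\|G_k^+\|_{L^2(B(x,|x|/2))}=O(|x|^{1/2})$. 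You instead extract the main term over a \emph{fixed} ball $B_R$ and close the argument with a $\limsup$/$\varepsilon$ argument, sending $|x|\to\infty$ first and $R\to\infty$ second; and you absorb the singularity of $G_k^+$ at $y=x$ into the $L^2$ kernel bound on $B(x,|x|/2)$, which dispenses with the paper's separate case $k|x-y|<1$ (this works for $n=2,3$ since $|G_k^+|^2$ is locally integrable there). Your bookkeeping is arguably cleaner --- the uniformity of the far\--field expansion is only ever needed on a compact set, and the exponent count $-\mu+\delta_0+\tfrac12<-\tfrac{n-1}{2}$ makes explicit exactly where $\mu>2$ and the choice of $\delta_0$ close to $\tfrac12$ enter --- at the mild price of the double limit; the paper's growing\--ball device buys a one\--pass limit but requires tracking an extra parameter $a$ and a separate near\--field case. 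Both proofs rest on the same essential ingredients: the far\--field expansion of $G_k^+$, the integrability $\mathcal F\in L^1(R^n)$, and the decay (2.3) paired with the weighted $L^2$ control of $f$ and $\nabla f$ on $\{|y|\ge|x|/2\}$.
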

\begin{proof}
In view of (1.9) one must study the behavior for $|x|\to \infty$ of the function
$$
G_k^+(|x-y|)=\frac{i}{4}\left(\frac{k}{2\pi|x-y|}\right)^{\frac{n-2}{2}}H^{(1)}_{\frac{n-2}{2}}(k|x-y|),
$$
where $H^{(1)}_{\frac{n-2}{2}}$ denotes the Hankel function of first kind and of order $\frac{n-2}{2}$. In order to do that we take two cases, $k|x-y|>1$ and $k|x-y|<1$. For the first case we use the behavior of the Hankel function $H^{(1)}_{\frac{n-2}{2}}$ for large argument (see, for example, \cite{L}), i.e. 
$$
H^{(1)}_{\frac{n-2}{2}}(z)=c_n\frac{e^{iz}}{\sqrt{z}}+O \left(\frac{1}{z^{\frac{3}{2}}}\right),\quad z\to +\infty.
$$
So that ($k>0$ is fixed) we have that,
$$
G_k^+(k|x-y|)=c_n\frac{e^{ik|x-y|}k^{\frac{n-3}{2}}}{|x-y|^{\frac{n-1}{2}}}+O\left(\frac{1}{|x-y|^{\frac{n+1}{2}}}\right), \quad |x|\to +\infty.
$$
Consider two subcases, $|y|\le |x|^a$ and $|y|\ge |x|^a$, where $0<a<\frac{1}{2}$ is a parameter. In the first case we have (since $a<\frac{1}{2}$)
$$
|x-y|^{-\frac{n-1}{2}}=|x|^{-\frac{n-1}{2}}(1+O(|x|^{a-1})),\quad |x|\to +\infty.
$$
That is why we have for $|y|\le |x|^a$ with $0<a<\frac{1}{2}$ that
$$
|x-y|^{-\frac{n-1}{2}}e^{ik|x-y|}=\frac{e^{ik|x|}e^{-ik(\theta',y)}}{|x|^{\frac{n-1}{2}}}+O (|x|^{\frac{n+1}{2}-2a}), \quad \theta'=\frac{x}{|x|},
$$
as $|x|\to +\infty$. Substituting this asymptotic to the integral
$$
S_1:=\int\limits_{k|x-y|>1}G_k^+(|x-y|)\left(i\nabla (\vec{W}(y)f)+i\vec{W}(y)\nabla f-\tilde{q}(y)f\right)\,dy
$$
gives that
$$
S_1=\int\limits_{k|x-y|>1,|y|\le |x|^a}G_k^+(|x-y|)\left(i\nabla (\vec{W}(y)f)+i\vec{W}(y)\nabla f-\tilde{q}(y)f\right)\,dy+
$$
$$
+\int\limits_{k|x-y|>1,|y|\ge |x|^a}G_k^+(|x-y|)\left(i\nabla (\vec{W}(y)f)+i\vec{W}(y)\nabla f-\tilde{q}(y)f\right)\,dy=
$$
$$
=c_n\frac{e^{ik|x|}k^{\frac{n-3}{2}}}{|x|^{\frac{n-1}{2}}}\int\limits_{k|x-y|>1,|y|\le |x|^a}e^{-ik(\theta',y)}\left(i\nabla (\vec{W}(y)f)+i\vec{W}(y)\nabla f-\tilde{q}(y)f\right)\,dy+
$$
$$
+\int\limits_{k|x-y|>1,|y|\ge |x|^a}G_k^+(|x-y|)\left(i\nabla (\vec{W}(y)f)+i\vec{W}(y)\nabla f-\tilde{q}(y)f\right)\,dy+
$$
\begin{equation}
+\int\limits_{k|x-y|>1,|y|\le |x|^a}O (|x|^{\frac{n+1}{2}-2a})\left(i\nabla (\vec{W}(y)f)+i\vec{W}(y)\nabla f-\tilde{q}(y)f\right)\,dy.
\end{equation}
The conditions (2.1)-(2.3) allow us easily conclude that for any $f\in H^1_{-\delta_0}(R^n)$ the integrand $i\nabla (\vec{W}(y)f)+i\vec{W}(y)\nabla f-\tilde{q}(y)f$ belongs to $L^1(R^n)$. Hence, the last term in the latter sum is $o(|x|^{-\frac{n-1}{2}})$ since $0<a<\frac{1}{2}$. Denoting the second term in the latter sum (2.6) by $I$ we may estimate it (using conditions (2.3)) as follows:
$$
|I|\le C\int\limits_{k|x-y|>1,|y|\ge |x|^a}\frac{|i\nabla (\vec{W}(y)f)+i\vec{W}(y)\nabla f-\tilde{q}(y)f|}{|x-y|^{\frac{n-1}{2}}}\,dy\le 
$$
$$
\le \frac{C}{|x|^{\frac{n-1}{2}}}\int\limits_{|x|^a \le |y|\le \frac{|x|}{2}}|i\nabla (\vec{W}(y)f)+i\vec{W}(y)\nabla f-\tilde{q}(y)f|\,dy+
$$
$$
+C\int\limits_{|y|\ge \frac{|x|}{2}}\frac{|\nabla \vec{W}(y)||f|+|\vec{W}(y)||\nabla f|+|\tilde{q}(y)||f|}{|x-y|^{\frac{n-1}{2}}}\,dy\le 
$$
$$
\le o\left(\frac{1}{|x|^{\frac{n-1}{2}}}\right)+C\int\limits_{|y|\ge \frac{|x|}{2}}\frac{|f|+|\nabla f|}{|x-y|^{\frac{n-1}{2}}|y|^{\mu}}\,dy.
$$
Since $f\in H^1_{-\delta_0}(R^n)$ the latter inequality implies
\begin{equation}
|I|\le o\left(\frac{1}{|x|^{\frac{n-1}{2}}}\right)+\frac{C}{|x|^{\frac{n-1}{2}}}\left(\int\limits_{|y|\ge \frac{|x|}{2}}\frac{1}{|x-y|^{n-1}|y|^{2\mu-(n-1)-2\delta_0}}\,dy\right)^{\frac{1}{2}}\|f\|_{H^1_{-\delta_0}(R^n)}.
\end{equation}
Since $\mu>\frac{n+1}{2}$ then $\delta_0>\frac{1}{2}$ can be chosen here such that the last integral in (2.7) might be considered as the convolution of "weak singularities" and therefore we have
\begin{equation}
I=o\left(\frac{1}{|x|^{\frac{n-1}{2}}}\right).
\end{equation}
The first case $k|x-y|>1$ is thus completely investigated. 

In order to consider the second case $k|x-y|<1$ we use the behavior of the Hankel function $H^{(1)}_{\frac{n-2}{2}}$ for small argument (see \cite{L})
$$
H^{(1)}_{\frac{n-2}{2}}(z)=\left\{\begin{array}{rcl}cz^{-\frac{n-2}{2}},\quad n>2,\\
                                                    c(1+\log(z)),\quad n=2,\\
                                                    \end{array}\right. \quad z\to 0.
$$
Let us consider first $n=3$. Then using this asymptotic and taking into account that for fixed $k>0$ and large $|x|$ it can be assumed that $|y|\ge \frac{|x|}{2}$, one can estimate the integral
$$
S_2:=\int\limits_{k|x-y|<1}G_k^+(|x-y|)\left(i\nabla (\vec{W}(y)f)+i\vec{W}(y)\nabla f-\tilde{q}(y)f\right)\,dy 
$$
as
$$
|S_2|\le C\left(\int\limits_{k|x-y|<1}|G_k^+(k|x-y|)|^2\,dy\right)^{\frac{1}{2}}\left(\int\limits_{|y|\ge \frac{|x|}{2}}\frac{|\nabla f|^2+|f|^2}{|y|^{2\mu}}\,dy\right)^{\frac{1}{2}} \le 
$$
\begin{equation}
\le \frac{C}{|x|^{\mu-\delta_0}}\left(\int\limits_{k|x-y|<1}|x-y|^{-2}\,dy\right)^{\frac{1}{2}}\|f\|_{H^1_{-\delta_0}(R^n)}=o\left(\frac{1}{|x|}\right),
\end{equation}
for $|x|\to +\infty$, if $\delta_0$ is chosen such that $\frac{1}{2}<\delta_0<1$. In the same way we have that
$$
S_2=o\left(\frac{1}{|x|^{\frac{1}{2}}}\right),\quad |x|\to +\infty
$$
in the two dimensional case due to the behavior of the Hankel function $H_0^{(1)}$ for small argument. Combining (2.6), (2.8) and (2.9) we get (2.5). 
\end{proof}
\begin{rem}
The proof of the last lemma shows that the function $L_kf(x)$ is continuous for all $x$ such that $|x|\ge R,$ where $R$ is large enough.
\end{rem}
These lemmas allow us to obtain the main result of this work.
\begin{thm}
Under the same assumptions as in Lemma 2.2 and for any $k\ne 0$ the integral equation (1.6) has a unique scattering solution from the space $H^1_{-\delta_0}(R^n)$ for some $\delta_0>\frac{1}{2}$.
\end{thm}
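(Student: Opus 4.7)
The plan is to set up equation (1.6) as a Fredholm equation on $H^1_{-\delta_0}(R^n)$ and reduce the existence of $u_{sc}$ to a uniqueness statement for the homogeneous problem. Writing (1.6) as
$$
(I-L_k)u_{sc}=\tilde u_0,
$$
Lemma 2.1 guarantees $\tilde u_0\in H^1_{-\delta_0}(R^n)$ and that $L_k$ maps this space into itself, while Lemma 2.2 tells us $L_k$ is compact on $H^1_{-\delta_0}(R^n)$. By the Fredholm alternative, it then suffices to show that the homogeneous equation $u=L_ku$ admits only the trivial solution in $H^1_{-\delta_0}(R^n)$.

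So I would concentrate the bulk of the argument on showing: if $u\in H^1_{-\delta_0}(R^n)$ satisfies $u=L_ku$, then $u\equiv 0$. First, by applying $(-\Delta-k^2)$ to both sides and unpacking the terms inside $L_k$, one verifies that such a $u$ satisfies the magnetic Schr\"odinger equation $Hu=k^2u$ in the distributional sense. Second, the asymptotic formula (2.5) of Lemma 2.3, applied to $f=u$, shows that $u=L_ku$ is a radiating solution: it admits the outgoing expansion $u(x)=c_n k^{(n-3)/2}|x|^{-(n-1)/2}e^{ik|x|}A(\theta')+o(|x|^{-(n-1)/2})$ at infinity, which in particular yields the Sommerfeld radiation condition in its $L^2$-integrated form on large spheres.

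The main work is then a Rellich-type identity. I would multiply $Hu=k^2u$ by $\bar u$, integrate over $B_R$, and integrate by parts, using the reality of $V$ and $\vec W$. Taking imaginary parts, the bulk terms $\int_{B_R}|(\nabla+i\vec W)u|^2$, $\int_{B_R}V|u|^2$ and $k^2\int_{B_R}|u|^2$ are real and drop out, leaving
$$
\operatorname{Im}\int_{\partial B_R}\bar u\,\partial_r u\,dS=-\operatorname{Re}\int_{\partial B_R}(\vec W\cdot\nu)|u|^2\,dS.
$$
Expanding $|\partial_r u-iku|^2=|\partial_r u|^2+k^2|u|^2-2k\operatorname{Im}(\bar u\,\partial_r u)$ on $\partial B_R$ and combining with the integrated radiation condition, the decay $|\vec W(x)|\le c|x|^{-\mu}$ with $\mu>2$ forces the right-hand side above to tend to $0$, which yields $\int_{\partial B_R}|u|^2\,dS\to 0$ as $R\to\infty$. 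By the classical Rellich lemma applied to $(\Delta+k^2)u=i(\nabla\cdot\vec W)u+2i\vec W\cdot\nabla u-\tilde q u$, whose right-hand side decays faster than $|x|^{-2}$ outside a large ball, $u$ must vanish outside that ball.

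Finally, I would invoke a unique continuation theorem for the magnetic Schr\"odinger operator with coefficients satisfying (2.1)--(2.2) to propagate $u\equiv 0$ from the exterior of a large ball to all of $R^n$. The hard part is this uniqueness step: verifying that the radiation condition inferred from Lemma 2.3 is strong enough (in the integrated $L^2$ sense on spheres) to feed into Rellich, and checking that the regularity in (2.1)--(2.2) falls within the scope of an available unique continuation result for magnetic potentials. Once that is in place, the Fredholm alternative delivers the unique scattering solution $u_{sc}\in H^1_{-\delta_0}(R^n)$ for every $k\neq 0$.
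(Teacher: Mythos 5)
Your overall skeleton matches the paper's proof: Fredholm alternative via compactness of $L_k$ (Lemmas 2.1 and 2.2), verification that a homogeneous solution $u=L_ku$ solves $Hu=k^2u$ and radiates (Lemma 2.3), a boundary integral identity over $\partial B_R$ whose imaginary part, combined with the radiation condition and the decay of $\vec W$, forces $\int_{\partial B_R}|u|^2\,d\sigma\to 0$, then vanishing of $u$ outside a large ball, and finally unique continuation (the paper cites H\"ormander, Theorem 17.2.8).

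The genuine gap is your appeal to ``the classical Rellich lemma'' to pass from $\int_{\partial B_R}|u|^2\,d\sigma\to 0$ to $u\equiv 0$ outside a large ball. Rellich's lemma in its classical form requires $u$ to satisfy the \emph{free} Helmholtz equation $(\Delta+k^2)u=0$ in an exterior domain; here the potentials are not compactly supported (they only decay like $|x|^{-\mu}$ with $\mu>2$), so $u$ satisfies a perturbed equation everywhere and the lemma does not apply as stated. This is exactly the step to which the paper devotes the bulk of its proof: it first upgrades the $L^2$ decay on spheres, via the asymptotic representation (2.5), to the pointwise statement that the far-field amplitude of $u$ vanishes, hence $u=o(|x|^{-(n-1)/2})$ and $\nabla u=o(|x|^{-(n-1)/2})$; it then expands $u$ in spherical harmonics, derives the perturbed Bessel ODE (2.13) for each radial coefficient $F(r)$ with right-hand side $O(r^{-\alpha})$, $\alpha>n$, and uses the Green's function (2.16) on $[r_0,2r_0]$ to show that every solution is asymptotically a combination of $C_jr^{-(n-1)/2}e^{\pm ikr}$; the decay $F(r)=o(r^{-(n-1)/2})$ then kills both constants, so $F\equiv 0$ for large $r$. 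You would need either to reproduce an argument of this type or to cite a Rellich-type theorem valid for short-range perturbations; as written, the step fails. The remaining concern you flag yourself --- whether the unique continuation result covers these coefficients --- is handled in the paper by reducing to a second-order elliptic operator with real coefficients, which is within the scope of H\"ormander's theorem.
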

\begin{proof}
Since the operator $L_k$ is compact in the space $H^1_{-\delta_0}(R^n)$ we can apply the Riesz theory in this Hilbert space. Based on this methodology we will prove that the
homogeneous equation $f-L_k f=0$ has only the trivial solution in the space $H^1_{-\delta_0}(R^n)$ (i.e. the operator $I-L_k$ is injective). But this will imply that the operator is also surjective and the inverse $(I-L_k)^{-1}$ is bounded in $H^1_{-\delta_0}(R^n)$.
This condition is equivalent to the claim that the equation (1.6) has a unique solution from the space $H^1_{-\delta_0}(R^n)$. 

It is possible to check that any $f\in H^1_{-\delta_0}(R^n)$ which satisfies the homogeneous equation $f-L_k f=0$ belongs to $H^2_{loc}(R^n)$ and satisfies also the equation 
$$
H f=k^2 f
$$
and Sommerfeld radiation condition (1.4). These facts imply that 
$$
0=\int\limits_{|x|\le R}\nabla\left(\overline{f}(\nabla+i \vec{W})f-f(\nabla-i \vec{W})\overline{f}\right)\,dx.
$$
The divergence theorem then gives that
$$
0=\int\limits_{|x|=R}(\overline{f}\partial_{\nu}f-f\partial_{\nu}\overline{f})\,d\sigma(x) + 2i\int\limits_{|x|=R}|f|^2\vec{W}\nu\,d\sigma(x),
$$
where $\nu$ denotes the normal vector at the boundary of the ball $B_R$. The radiation conditions (1.4) for the functions $f$ and $\overline{f}$ allow us to conclude that
$$
0=2i\int\limits_{|x|=R}|f|^2\vec{W}\nu\,d\sigma(x)+2ik\int\limits_{|x|=R}|f|^2\,d\sigma(x)+o\left(\frac{1}{R^{\frac{n-1}{2}}}\right)\int\limits_{|x|=R}f\,d\sigma(x).
$$ 
Now the assumption that the function $f\in H^1_{-\delta_0}(R^n)$ satisfies the homogeneous equation $f-L_k f=0$ implies by the equation (2.5) that 
$$
f=O\left(\frac{1}{|x|^{\frac{n-1}{2}}}\right),\quad |x|\to \infty.
$$ 
This behavior and the condition (2.3) imply that
$$
\int\limits_{|x|=R}|f|^2\,d\sigma(x)=o(1), \quad R\to \infty.
$$
This fact and Lemma 2.3 (see (2.5)) allow us easily conclude that (it is enough to integrate in (2.5) with respect to $x$)
$$
\int\limits_{R^n}e^{-ik(\theta',y)}\left(i\nabla (\vec{W}(y)f(y))+i\vec{W}(y)\nabla f(y)-\tilde{q}(y)f(y)\right)\,dy=0.
$$
Thus, we have actually (since the function $f$ satisfies the homogeneous equation $f-L_k f=0$) that
\begin{equation}
f=o\left(\frac{1}{|x|^{\frac{n-1}{2}}}\right),\quad |x|\to \infty.
\end{equation}
The Sommerfeld radiation condition (1.4) implies also that 
\begin{equation}
\nabla f=o\left(\frac{1}{|x|^{\frac{n-1}{2}}}\right),\quad |x|\to \infty.
\end{equation}
Using these two facts we are going to prove that $f=0$ a.e. In order to do so we first prove that $f(x)\equiv 0$ for $|x|\ge R_0$ with $R_0$ large enough.
Indeed, it suffices to prove that, for $r\ge R_0$, the radial function
\begin{equation}
F(r):=\int\limits_{S^{n-1}}f(r\theta)\phi(\theta)\,d\theta
\end{equation}
is identically equal to zero for each eigenfunction $\phi$ of the Laplace operator $\Delta_S$ on the unit sphere $S^{n-1}$. The eigenfunctions satisfy the equations
$$
(\Delta_S+\mu^2)\phi=0,\quad \mu^2=k(k+n-2),\quad k=0,1,2,..., 
$$
where $\mu^2$ are the eigenvalues of the Laplace operator $\Delta_S$ (see, for example, \cite{Sh}).

In view of the formula for the Laplacian $\Delta$ on $R^n$ in polar coordinates
$$
\Delta=\frac{\partial^2}{\partial r^2}+\frac{n-1}{r}\frac{\partial}{\partial r}+\frac{1}{r^2}\Delta_S,
$$
it follows that $F(r)$ satisfies the ordinary differential equation
$$
F''(r)+\frac{n-1}{r}F'(r)+(k^2-\frac{\mu^2}{r^2})F(r)=
$$
\begin{equation}
=-2i\int\limits_{S^{n-1}}\vec{W}(r\theta)\nabla f(r\theta)\phi(\theta)\,d\theta-\int\limits_{S^{n-1}}\tilde{V}(r\theta)f(r\theta)\phi(\theta)\,d\theta,
\end{equation}
where $\tilde{V}=i\nabla\vec{W}+|\vec{W}|^2+V$. Let us rewrite this linear equation in the form
$$
F''(r)+\frac{n-1}{r}F'(r)+(k^2-\frac{\mu^2}{r^2})F(r)=\Phi(r,F).
$$
Since $V$ and $\vec{W}$ satisfy (2.3) then it is not difficult to check that function $\Phi(r)$ from the right-hand side of equation (2.13) for $r\to \infty$ behaves as 
\begin{equation}
\Phi(r,F)=O\left(\frac{1}{r^{\alpha}}\right),\quad \alpha>n.
\end{equation}
It is well-known (see, for example, \cite{L}) that the homogeneous equation corresponding to (2.13) has two linearly independent solutions $r^{-\frac{n-2}{2}}H^{(j)}_{\nu}(kr)$,
$j=1,2,$ where $H^{(1)}_{\nu}(z)$ and $H^{(2)}_{\nu}(z)$ are the Hankel functions of order $\nu,\nu^2=\mu^2+(\frac{n-2}{2})^2,$ and of the first and the second kind, respectively. 
In view of the asymptotic behavior of the Hankel functions (see \cite{L}) it follows that the behavior of these two solutions is of the form
\begin{equation}
r^{-\frac{n-2}{2}}H^{(j)}_{\nu}(kr)=\frac{C_j}{r^{\frac{n-1}{2}}}e^{\pm ikr}+o\left(\frac{1}{r^{\frac{n-1}{2}}}\right),\quad r\to \infty,
\end{equation}
where $+$ corresponds to $j=1$ and $-$ corresponds to $j=2$. Next, we use Green's function (see, for example, \cite{St}) for the Bessel equation with Dirichlet boundary conditions on the interval 
$[r_0,2r_0]$, where $r_0$ is large enough,
\begin{equation}
g(r,\xi)=\frac{C}{\xi^{n-1}} \left\{
                              \begin{array}{rcl}
                                  u_1(r)u_2(\xi),\quad r<\xi,\\
                                  u_1(\xi)u_2(r),\quad r>\xi.\\
                              \end{array}
                             \right.
\end{equation}                                                                                                                                
Here $u_1$ and $u_2$ are two linearly independent solutions of homogeneous equation (2.13) which satisfy
homogeneous Dirichlet boundary conditions at $r_0$ and $2r_0$, respectively, that is,
$$
u_1(r)=\left(\frac{1}{r_0r}\right)^{\frac{n-2}{2}}\left(H^{(1)}_{\nu}(kr_0)H^{(2)}_{\nu}(kr)-H^{(1)}_{\nu}(kr)H^{(2)}_{\nu}(kr_0)\right),
$$
$$
u_2(r)=\left(\frac{1}{2r_0r}\right)^{\frac{n-2}{2}}\left(H^{(1)}_{\nu}(2kr_0)H^{(2)}_{\nu}(kr)-H^{(1)}_{\nu}(kr)H^{(2)}_{\nu}(2kr_0)\right),
$$
and $C=-r_0^{n-1}u_2(r_0)u_1'(r_0)$. 
Using this Green's function, equation (2.13) can be reduced to the following
linear integral equation
$$
F(r)=K_1r^{-\frac{n-2}{2}}H^{(1)}_{\nu}(kr)+K_2r^{-\frac{n-2}{2}}H^{(2)}_{\nu}(kr)+\int\limits_{r_0}^{2r_0}g(r,\xi)\Phi(\xi,F)\,d\xi,
$$
where $K_1$ and $K_2$ are constants. This integral equation can be uniquely solved by iterations. Since (2.14) and (2.15) hold then using representation (2.16) we obtain
for the integral part of the latter integral equation the following estimate:
\begin{equation}
\int\limits_{r_0}^{2r_0}g(r,\xi)\Phi(\xi,F)\,d\xi=O\left(\frac{1}{r^{\alpha-1}}+\frac{1}{r^{n-1}r_0^{\alpha-n}}\right),\quad \alpha>n,
\end{equation}
as $r_0\to +\infty$ and $r_0\le r\le 2r_0$. Thus, the estimates (2.17) allow us to conclude that any solution $F(r)$ of the non-homogeneous equation (2.13) has asymptotic (when $r\to \infty$) that is the linear combination of two asymptotic (2.15). 
Since the hypothesis implies that $F(r)=o\left(\frac{1}{r^{\frac{n-1}{2}}}\right)$, we deduce that $V(r)\equiv 0$ for all $r$ large enough. 
Thus, the same is true for $f(x)$ (see definition (2.12)) for all $|x|\ge R_0$ with $R_0$ large enough. 

We are in the position now to apply the unique continuation principle (UCP) in $R^n$. Due to UCP for a second order elliptic differential operators with real coefficients (see Theorem 17.2.8 in \cite{H})
we may immediately conclude that $f\equiv 0$ in $R^n$. Thus, $I-L_k$ is injective and therefore, the integral equation (1.6) has a unique solution from the space $H^1_{-\delta_0}(R^n)$ which is given by the formula
\begin{equation}
u_{sc}=(I-L_k)^{-1}\tilde{u}_0\Longleftrightarrow u=u_0+(I-L_k)^{-1}L_ku_0,
\end{equation}
where $u$ is as in (1.5). Theorem 2.1 is completely proved.
\end{proof}
\begin{cor}
If the conditions (2.1)-(2.3) are satisfied then the magnetic Schr\"odinger operator $H$ has no positive eigenvalues.
\end{cor}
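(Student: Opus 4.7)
Proof plan. The strategy is to reduce the corollary to the injectivity of $I - L_k$ proved inside Theorem 2.1: it suffices to show that any putative $L^2$ eigenfunction of $H$ at a positive energy already solves the homogeneous Lippmann-Schwinger equation $(I-L_k)f = 0$ in $H^1_{-\delta_0}(R^n)$, whence Theorem 2.1 forces $f \equiv 0$.

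Setup. Let $f \in L^2(R^n)$ satisfy $Hf = k^2 f$ with $k > 0$. Since $D(H) = W^2_2(R^n)$ under our hypotheses, $f \in W^2_2$, so in particular $f \in H^1_{-\delta_0}(R^n)$ for any $\delta_0 > \frac{1}{2}$. Writing the eigenequation as $(-\Delta - k^2) f = g$ with $g := i\nabla(\vec{W} f) + i\vec{W}\cdot\nabla f - \tilde{q} f$, the arguments of Lemma 2.1 place $g \in L^2_{\delta_0}(R^n) \subset H^{-1}_{\delta_0}(R^n)$. Applying the outgoing resolvent, $L_k f \in H^1_{-\delta_0}(R^n)$ also solves the same equation, so $h := f - L_k f$ is a smooth entire solution of the free Helmholtz equation $(-\Delta - k^2) h = 0$ on $R^n$.

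Main step. One shows $h \equiv 0$. By Lemma 2.3, $L_k f$ has the outgoing asymptotic $c_n k^{(n-3)/2}|x|^{-(n-1)/2}e^{ik|x|}A(\theta') + o(|x|^{-(n-1)/2})$, while an entire Helmholtz solution (a Herglotz-type wave) carries both outgoing and incoming components of size $|x|^{-(n-1)/2}$. Adding these gives the asymptotic of $f$; the condition $f \in L^2(R^n)$, through a subsequence $R_j \to \infty$ with $\int_{|x|=R_j}|f|^2\,d\sigma \to 0$, rules out any persistent $|x|^{-(n-1)/2}$ leading term, forcing both the outgoing amplitude $A(\theta')$ and the Herglotz density of $h$ to vanish. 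Hence $h \equiv 0$ and $f = L_k f$ in $H^1_{-\delta_0}(R^n)$; the injectivity of $I - L_k$ established inside Theorem 2.1 then gives $f \equiv 0$, contradicting the assumed non-triviality of the eigenfunction.

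The hard part is this Rellich-type step, which has to transfer the averaged $L^2$-decay of $f$ into pointwise vanishing of both the outgoing and the incoming far-field amplitudes. In practice it can be executed by the same spherical-harmonic / Hankel-function ODE analysis used at the end of Theorem 2.1: expand $F(r) := \int_{S^{n-1}} f(r\theta)\phi(\theta)\,d\theta$, which lies in $L^2(r^{n-1}\,dr)$ because $f \in L^2(R^n)$, read off the coefficients of the $e^{\pm ikr}/r^{(n-1)/2}$ components via the asymptotics (2.15), and conclude that they vanish since no non-trivial combination of $e^{\pm ikr}$ is in $L^2(r^{n-1}\,dr)$ against the weight $r^{-(n-1)}$. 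Identifying $f = L_k f$ and invoking Theorem 2.1 (together with the unique continuation argument used there) then finishes the proof.
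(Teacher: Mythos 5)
Your proposal follows essentially the same route as the paper: reduce the $L^2$ eigenfunction at energy $k^2>0$ to a solution of the homogeneous Lippmann--Schwinger equation $f=L_k f$ in $H^1_{-\delta_0}(R^n)$ and then invoke the injectivity of $I-L_k$ established in the proof of Theorem 2.1. The only difference is that you explicitly justify, via a Rellich-type far-field argument, the passage from $Hf=k^2f$ to $f=L_kf$ (i.e.\ that the entire Helmholtz solution $f-L_kf$ vanishes), a step the paper merely asserts; this is an elaboration of the same argument rather than a different method.
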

\begin{proof}
If $\lambda$ is a positive eigenvalue of $H$ then 
$$
Hu=\lambda u,\quad u\in L^2(R^n),\quad Hu\in L^2(R^n).
$$
It means that this $u$ belongs to the domain of the Friedrichs self-adjoint extension of $H$ and therefore $u\in W^1_2(R^n)$. This fact allows us to conclude that this $u$ satisfies the homogeneous equation
$$
u=L_{\sqrt{\lambda}}u
$$
with an integral operator from (1.9).
Thus, since $W^1_2(R^n)\subset H^1_{-\delta}(R^n)$ for $\delta>\frac{1}{2}$ (actually this imbedding holds for any $\delta>0$) we may apply to this $u$ the same proof as in Theorem 2.1 and conclude that actually $u\equiv 0$.
\end{proof}
Using Agmon's results (1.8) the operator $L_k$ can be extended to the space $L^2_{-\delta}(R^n)$ as a uniformly bounded operator with respect to $k\ge 1$ such that 
\begin{equation}
\|u_{sc}\|_{L^2_{-\delta}(R^n)}\le C, \quad k\ge 1,
\end{equation}
where constant $C$ depends only on the corresponding norms of $V$, $\vec{W}$ and $\nabla\vec{W}$. Based on this fact one can show that if the corresponding norms of $V$, $\vec{W}$ and $\nabla\vec{W}$ are small enough then the operator norm of $L_k$ as an operator from $L^2_{-\delta}(R^n)$ to itself is strictly less than $1$. In that case the formula (2.18) can be rewritten as
\begin{equation}
u=u_0+\sum\limits_{j=1}^{\infty}L^j_k(u_0).
\end{equation}  
Thus, the scattering solution $u$ can be obtained as the series of iterations of $u_0$ in the equation (1.5). 

\section{Scattering amplitude and direct backscattering Born approximation}

In this section we will consider the direct backscattering Born approximation for the magnetic Schr\"odinger operator $H$ with conditions (2.1)-(2.3). The motivation to this problem is connected to the fact that the knowledge of the scattering amplitude with the backscattering data gives essential information about the unknown function $V$ and $\vec{W}$.

Theorem 2.1 and Lemma 2.3 (see (2.5)) yield the following asymptotical representation for the scattering solutions $u(x,k,\theta)$ with fixed $k>0$ as $|x|\to +\infty$:
$$
u(x,k,\theta)=e^{ik(x,\theta)}+c_n\frac{e^{ik|x|}k^{\frac{n-3}{2}}}{|x|^{\frac{n-1}{2}}}A(k,\theta',\theta)+o\left(\frac{1}{|x|^{\frac{n-1}{2}}}\right),
$$ 
where function $A$ is called the scattering amplitude and defined by
\begin{equation}
A(k,\theta',\theta)=\int\limits_{R^n}e^{-ik(\theta',y)}\left(i\nabla (\vec{W}(y)u)+i\vec{W}(y)\nabla u-\tilde{q}(y)u\right)\,dy.
\end{equation}
Substituting $u=u_0+u_{sc}$ into the equation (3.1) gives that
$$
A(k,\theta',\theta)=\int\limits_{R^n}e^{-ik(\theta',y)}\left(i\nabla (\vec{W}(y)u_0)+i\vec{W}(y)\nabla u_0-\tilde{q}(y)u_0\right)\,dy+
$$
$$
+\int\limits_{R^n}e^{-ik(\theta',y)}\left(i\nabla (\vec{W}(y)u_{sc})+i\vec{W}(y)\nabla u_{sc}-\tilde{q}(y)u_{sc}\right)\,dy:= 
$$
\begin{equation}
:=A_B(k,\theta',\theta)+R(k,\theta',\theta).
\end{equation}
The function $A_B$ is called the direct Born approximation. It can be checked (using integration by parts) that $A_B$ is actually equal to 
$$
A_B(k,\theta',\theta)=-k(\theta+\theta')F(\vec{W})(k(\theta-\theta'))-F(\tilde{q})(k(\theta-\theta')),
$$
where $F$ denotes usual $n-$dimensional Fourier transform as
$$
F(f)(\xi)=\int\limits_{R^n}f(x)e^{i(x,\xi)}\,dx.
$$
The particular case $\theta'=-\theta$ yields the direct backscattering Born approximation 
\begin{equation}
A_B^b(k,-\theta,\theta)=-F(\tilde{q})(2k\theta).
\end{equation}


Formulae (3.2) and (3.3) show that in the frame of the Born approximation
$$
A(k,-\theta,\theta)\approx -F(|\vec{W}|^2+V)(2k\theta).
$$
But we want to write more terms in the Born series. For this purpose we calculate term $R$ in the scattering amplitude. Using the series (2.20) and integration by parts we have that
$$
R(k,-\theta,\theta)=-i\int\limits_{R^n}e^{ik(\theta,y)}\nabla\vec{W}(y)L_ku_0(y)\,dy+2k\theta\int\limits_{R^n}e^{ik(\theta,y)}\vec{W}(y)L_ku_0(y)\,dy-
$$
$$
-\int\limits_{R^n}e^{ik(\theta,y)}\tilde{q}(y)L_ku_0(y)\,dy+R_2(k,-\theta,\theta),
$$ 
where the term $R_2$ corresponds to the series $\sum\limits_{j=2}^{\infty}L^j_k(u_0)$ and equals to
$$
R_2(k,-\theta,\theta)=i\int\limits_{R^n}e^{ik(\theta,y)}\nabla\vec{W}(y)\sum\limits_{j=2}^{\infty}L^j_ku_0(y)\,dy+
$$
\begin{equation}
+2i\int\limits_{R^n}e^{ik(\theta,y)}\vec{W}(y)\nabla\left(\sum\limits_{j=2}^{\infty}L^j_ku_0(y)\right)\,dy-\int\limits_{R^n}e^{ik(\theta,y)}\tilde{q}(y)\sum\limits_{j=2}^{\infty}L^j_ku_0(y)\,dy.
\end{equation}
It will be shown that the term which correspond to $R_2$ in the definition (3.1) might be neglected because of the smallness of the operator norm $L_k$ in the space $L^2_{-\delta}(R^n)$. 

Since $L_k u_0(y)$ is equal to 
$$
\int\limits_{R^n}e^{ik(\theta,z)}G_k^{+}(|y-z|)\left(i\nabla \vec{W}(z)-2k\theta \vec{W}(z)-\tilde{q}(z)\right)\,dz,
$$
then we obtain (after some simple calculations) the following representation:
$$
R(k,-\theta,\theta):=R_1(k,-\theta,\theta)+R_2(k,-\theta,\theta)=
$$
$$
=\int\limits_{R^n}\int\limits_{R^n}e^{ik(\theta,y+z)}G_k^{+}(|y-z|)\nabla\vec{W}(y)\nabla\vec{W}(z)\,dy\,dz+
$$
$$
+4ik\int\limits_{R^n}\int\limits_{R^n}e^{ik(\theta,y+z)}G_k^{+}(|y-z|)\nabla\vec{W}(y)\theta\vec{W}(z)\,dy\,dz-
$$
$$
-4k^2\int\limits_{R^n}\int\limits_{R^n}e^{ik(\theta,y+z)}G_k^{+}(|y-z|)\theta\vec{W}(y)\theta\vec{W}(z)\,dy\,dz+
$$
$$
+\int\limits_{R^n}\int\limits_{R^n}e^{ik(\theta,y+z)}G_k^{+}(|y-z|)\tilde{q}(y)\tilde{q}(z)\,dy\,dz+R_2:=
$$
\begin{equation}
:=I_1+I_2+I_3+I_4+R_2.
\end{equation}
It can be mentioned here that this equality must be understood in the sense of tempered distributions. 

Using the facts $F(G_k^+)(\eta)=\frac{1}{\eta^2-k^2-i0}$ and $F(\phi\cdot \psi)=(2\pi)^{-n}F(\phi)\ast F(\psi)$ we can calculate the terms $I_j,j=1,2,3,4,$ more precisely as
$$
I_1=(2\pi)^{-n}\int\limits_{R^n}\frac{F(\nabla\vec{W})(k\theta+\eta)F(\nabla\vec{W})(k\theta-\eta)}{\eta^2-k^2-i0}\,d\eta,
$$
$$
I_2=4ik(2\pi)^{-n}\int\limits_{R^n}\frac{F(\nabla\vec{W})(k\theta+\eta)\theta F(\vec{W})(k\theta-\eta)}{\eta^2-k^2-i0}\,d\eta,
$$
$$
I_3=-4k^2(2\pi)^{-n}\int\limits_{R^n}\frac{\theta F(\vec{W})(k\theta+\eta)\theta F(\vec{W})(k\theta-\eta)}{\eta^2-k^2-i0}\,d\eta,
$$
\begin{equation}
I_4=(2\pi)^{-n}\int\limits_{R^n}\frac{F(\tilde{q})(k\theta+\eta)F(\tilde{q})(k\theta-\eta)}{\eta^2-k^2-i0}\,d\eta.
\end{equation}
Our next step is to neglect the term $R_2$ in (3.5) and justify this neglect. Indeed, using (1.7) and Agmon's results (1.8) the $L^2_{-\delta}$-norm of $L_ku_0$ can be estimated as (uniformly in $|k|\ge 1$)
$$
\|L_ku_0\|_{L^2_{-\delta}(R^n)}\le c\left(\|\nabla\vec{W}\|_{L^2_{\delta}(R^n)}+\|\vec{W}\|_{L^2_{\delta}(R^n)}+\|\tilde{q}\|_{L^2_{\delta}(R^n)}\right).
$$
The conditions (2.1)-(2.3) show that the right hand-side of the latter inequality is finite. Thus, there is a constant $c_0$ depending only on the $L^2_{\delta}$-norms of functions $\vec{W}$, $\nabla\vec{W}$ and $\tilde{q}$ such that uniformly in $|k|\ge 1$
\begin{equation}
\|L_ku_0\|_{L^2_{-\delta}(R^n)}\le c_0.
\end{equation}
At the same time for any function $f\in H^1_{-\delta}(R^n)$ with some $\delta>\frac{1}{2}$ we can easily obtain
\begin{equation}
\|L_kf\|_{L^2_{-\delta}(R^n)}\le c\left(\|\nabla\vec{W}\|_{L^p_{2\delta}(R^n)}+\|\vec{W}\|_{L^{\infty}_{2\delta}(R^n)}+\|\tilde{q}\|_{L^p_{2\delta}(R^n)}\right)\|f\|_{H^1_{-\delta}(R^n)},
\end{equation}
where $p$ is the same as in the conditions (2.1)-(2.2). We can rewrite (3.8) in the form of operator norm
\begin{equation}
\|L_k\|_{H^1_{-\delta}(R^n)\to L^2_{-\delta}(R^n)}\le c_1,
\end{equation}
where constant $c_1$ depends only on the norms of functions $\vec{W}$, $\nabla\vec{W}$ and $\tilde{q}$ from (3.8). Hence, we may assume that these norms are chosen so small that $c_1<1$. Now we extend the operator $L_k$ as an operator from $L^2_{-\delta}(R^n)$ to $L^2_{-\delta}(R^n)$ with the same norm estimate as in (3.9). This fact together with estimate (3.7) imply that
\begin{equation}
\|\sum\limits_{j=2}^{\infty}L^j_ku_0\|_{L^2_{-\delta}(R^n)}\le \frac{c_0c_1}{1-c_1}.
\end{equation}
Hence, the left hand-side of (3.10) can be made as small as we want if $c_1$ (and, in addition, $c_0$) are chosen small enough. This fact and duality arguments show that 
\begin{equation}
\|\nabla(\sum\limits_{j=2}^{\infty}L^j_ku_0)\|_{H^{-1}_{-\delta}(R^n)}\le \frac{c_0c_1}{1-c_1},
\end{equation}
The estimates (3.10) and (3.11) imply that one can have the term $R_2(k,-\theta,\theta)$ as small as desired uniformly in $|k|\ge 1$ and $\theta\in S^{n-1}$ if the corresponding norms of functions $\vec{W}$, $\nabla\vec{W}$ and $\tilde{q}$ (or the constants $c_0$ and $c_1$) are chosen small enough. Thus, the term $R_2(k,-\theta,\theta)$ can be neglected in the Born approximation. 

Summarizing our considerations (see (3.5)-(3.6) and (3.10)-(3.11)) we may now obtain the following direct backscattering Born approximation (more precise than (3.3)) for the magnetic Schr\"odinger operator
\begin{equation}
A(k,-\theta,\theta)\approx -F(|\vec{W}|^2+V)(2k\theta)+I_1+I_2+I_3+I_4.
\end{equation}
This formula gives us very good approximation for the backscattering amplitude $A$. It is very important that for this approximation we need to have only the magnetic potential $\vec{W}$ and electric potential $V$, but we do not need (as we can see the formula (3.1)) to have the scattering solutions $u(x,k,\theta)$ of the equation
$$
Hu(x)=k^2 u(x).
$$
This direct approximation (3.12) will be effectively used for the inverse backscattering Born approximation. Namely, due to formulas (3.6) we will be able to calculate precisely the quadratic term in the Born series that corresponds to the inverse backscattering approximation and to estimate its smoothness. This smoothness result together with (3.12) will give us the solution of the inverse backscattering problem with respect to the reconstruction of the singularities and jumps of the unknowns. These problems will be investigated carefully in the futures publications. 

\bigskip

\section*{Acknowledgments} 
This work was supported by the Academy of Finland (application number 250215, Finnish Programme
for Centres of Excellence in Research 2012-2017).

\bigskip

\end{document}